\newtheorem{theorem}{Theorem}
\theoremstyle{definition}
\newtheorem{definition}{Definition}[section]
\begin{document}
\title{Towards enhancing quantum expectation estimation of matrices through partial Pauli decomposition techniques and post-processing}
\author{Lu Dingjie}
\email[]{ludj@ihpc.a-star.edu.sg}
\affiliation{Institute of High Performance Computing, Agency for Science, Technology and Research (A*STAR), 138632 Singapore}

\author{Li Yangfan}
\affiliation{Institute of High Performance Computing, Agency for Science, Technology and Research (A*STAR), 138632 Singapore}

\author{Dax Enshan Koh}
\affiliation{Institute of High Performance Computing, Agency for Science, Technology and Research (A*STAR), 138632 Singapore}

\author{Wang Zhao}
\affiliation{Institute of High Performance Computing, Agency for Science, Technology and Research (A*STAR), 138632 Singapore}

\author{Liu Jun}
\affiliation{Institute of High Performance Computing, Agency for Science, Technology and Research (A*STAR), 138632 Singapore}

\author{Liu Zhuangjian}
\email[]{liuzj@ihpc.a-star.edu.sg}
\affiliation{Institute of High Performance Computing, Agency for Science, Technology and Research (A*STAR), 138632 Singapore}

\begin{abstract}
We introduce an approach for estimating the expectation values of arbitrary
$n$-qubit matrices $M \in \mathbb{C}^{2^n\times 2^n}$ on a quantum computer. In
contrast to conventional methods like the Pauli decomposition that utilize $4^n$
distinct quantum circuits for this task, our technique employs at most $2^{n+1}$ unique circuits, with even fewer required for matrices with limited bandwidth. Termed the \textit{partial Pauli decomposition}, our method involves observables formed as the Kronecker product of a single-qubit Pauli operator and orthogonal projections onto the computational basis. By measuring each such observable, one can simultaneously glean information about $2^n$ distinct entries of $M$ through post-processing of the measurement counts. This reduction in quantum resources is especially crucial in the current noisy intermediate-scale quantum era, offering the potential to accelerate quantum algorithms that rely heavily on expectation estimation, such as the variational quantum eigensolver and the quantum approximate optimization algorithm.
\end{abstract}

\maketitle

\section{Introduction}\label{sec:1_intro}
Noisy intermediate-scale quantum (NISQ) devices represent a significant milestone in the pursuit of fault-tolerant quantum computers \citep{Preskill2018Quantum}. Characterized by qubit counts ranging from approximately 50 to several hundred \citep{Tannu2019}, these NISQ devices grapple with short coherence times, limited qubit connectivity, and vulnerability to noise \citep{Ravi2023}.
Designed to address these limitations, a noteworthy category of algorithms that has gained increased attention is the family of variational quantum algorithms (VQAs) \citep{Peruzzo2014NC,Cerezo2021NRP}, which operate by formulating a cost function in terms of the expectation values of quantum observables. These values are estimated through measurements of the output states of parameterized quantum circuits, with parameters optimized by a classical optimizer.
Notably, VQAs like the variational quantum eigensolver (VQE)
\citep{Peruzzo2014NC} and the quantum approximate optimization algorithm (QAOA)
\citep{Farhi2014arXiv} have proven highly versatile, finding extensive
applications across diverse fields, including chemistry
\citep{Lee2022JCTC,Singh2023JCP}, quantum many-body systems
\citep{you2021exploring}, linear systems
\citep{Xu2021SB,Huang2021NJP,Liu2021PRA,Sato2021PRA}, subatomic physics
\citep{Lu2019PRA}, electromagnetics \citep{Ewe2021IEEE}, fluid dynamics
\citep{Leong2022SR}, colloidal transport \citep{Leong2023IJNMHFF}, and topology optimization \citep{Sato2022arXiv}.

A central challenge for VQAs revolves around the computational costs associated with estimating expectation values of matrices \citep{Wecker2015PRA}. In many scenarios, direct acquisition of these expectations through the measurement of a single observable proves inefficient, prompting the adoption of indirect techniques. When devising such techniques, two primary considerations come to the forefront. The first consideration revolves around the number of terms involved in decomposing the matrix in terms of easily measurable observables. A common approach is the Pauli decomposition, wherein matrices are expressed as a linear combination of multi-qubit Pauli operators. Specifically, an $n$-qubit matrix $A \in \mathbb C^{2^n \times 2^n}$ is expressed as $A = \sum_i \alpha_i P_i$, where each $P_i$ is a Pauli operator, and $\alpha_i = \tr(P_i A)/2^n$ is the coefficient of $P_i$ in the Pauli decomposition of $A$. However, this approach is often inefficient for many pertinent problems, potentially resulting in as many as $4^n$ non-zero terms. Moreover, even in the simple scenario where the matrix $A$ has only one non-zero element, this decomposition may entails a total of $2^n$ non-zero terms. This inefficiency is especially evident in problems involving numerical methods, where band-width matrices are utilized. For instance, for the finite element method (FEM), a widely used numerical technique that involves discretizing continuous physical domains into a finite number of elements connected with nodes, the corresponding matrix $A$ in the system of linear equations $Ax=b$---with $A\in {\mathbb{C}^{2^n\times 2^n}}$ being the stiffness matrix, $b$ the external loading vector, and $x$ the solution---typically possesses $O(4^n)$ non-zero components in the Pauli decomposition.

Alternative schemes for representing $A$ include expressing it as a weighted sum of Pauli strings \citep{Huang2021NJP,Bravo2019arX,Xu2021SB}, or in terms of quantum-compatible operators that may deviate from being Pauli (or more generally, Hermitian) \citep{Liu2021PRA,Sato2021PRA}. However, these approaches often exhibit a tendency to be problem-specific. Furthermore, linear combinations or spectral decompositions prescribed by existing methods may not be well-suited for NISQ devices, given the limitations imposed by their circuit depth and qubit connectivity.

The second consideration is the measurement workload, which is determined by the number of unique circuits used in the estimation process \citep{Jena2019arXiv,Izmaylov2019JCTC,Zhao2020PRA}.
As stated above, a naïve approach of expectation evaluation is to decompose the matrix $A$ into a linear combination of Pauli operators as $A=\sum_{i=1}^L\alpha_i P_i$, where for each $i\in \{1,\ldots,L\}$, $\alpha_i$ is a complex-valued coefficient and $P_i$ is Pauli operator~\citep{Peruzzo2014NC}. This process is followed by separate measurements of each term. While conceptually straightforward, this approach is computationally expensive because of the numerous individual measurements required. The number $L$ of measurements can be as large as $4^n$ when $A$ is a fully dense matrix; explicitly, one can write the expectation as $\bra \psi A\ket\psi= \sum_{i=1}^L c_i \bra{\psi} P_i\ket{\psi} $. To increase efficiency, an improved strategy that has been proposed involves clustering mutually commuting Pauli operators \citep{Gokhale2020IEEETQE,Kandala2017NATURE,Verteletskyi2020TJCP}, thereby enabling simultaneous measurements on all the operators within each cluster.
However, minimizing the number of clusters may not necessarily reduce the total number of measurements \citep{McClean2016NJP}. The grouping is efficiently optimized using a minimal clique cover, which clusters qubit-wise commuting terms \citep{Verteletskyi2020TJCP}. However, the total number of groups increases with the number of Pauli operators, scaling as $O(n^4)$, making it less practical for larger-scale problems. Another measurement reduction method involves the classical shadow technique which acquires a classical description of a quantum state through subjecting it to a random unitary before measuring it \citep{Huang2020NP,Koh2022Quantum,Bu2024npjQI}.
However, this method too faces challenges, including issues with the selection of suitable ensembles, and the practical implementation of random measurements. Yet another strategy is based on Bell measurements \citep{Liu2021PRA,Kondo2022Quantum,Sato2022arXiv}; such a strategy can optimize the circuit count and is expected to be effective for matrices with limited band-widths.

In response to these two considerations, we have proposed an efficient algorithm to transform arbitrary matrices into a readily comprehensible and highly implementable form with the \textit{partial Pauli decomposition}. This transformation utilizes a sequence of CNOT gates to map matrices onto the Pauli basis. In addition, this approach provides the capability to generate $2^{n}$ distinct clusters for an $n$-qubit system, each comprising $2^{n}$ terms positioned at specific locations within the $2^n \times 2^n$ matrix. This grouping enables simultaneous measurements for each cluster, allow for the prediction of multiple properties via post-processing. Furthermore, this method ultimately results in a complete matrix representation that allows a greater degree of simultaneity in measurements compared to sequential measurements of each term separately. Notably, binary arithmetic, in conjunction with the partial Pauli measurements, facilitates measurements with fewer computational resources, e.g.\ qubits, and offers a clear and easily interpretable strategy for implementation.

For better clarity, we outline the merits of the proposed algorithm for arbitrary $2^n \times 2^n$ matrices corresponding to $n$-qubit systems.

\begin{itemize}
    \item[1.]\textbf{Efficiently transforming arbitrary matrices into a quantum-compatible form:} The method entails $2^n$ unique transformation matrices, each corresponding to a specific circuit, accommodating $2^{n}$ distinct components. Consequently, these matrices represent arbitrary $2^n \times 2^n$ matrices. Furthermore, this transformation exclusively comprises only CNOT gates, acknowledged for their efficiency in implementation and their controllability.
    \item[2.] \textbf{Efficient decomposition with the partial Pauli decomposition:} In contrast to the conventional Pauli decomposition, a subset of $m$ qubits, instead of the full $n$ qubits, requires transformation into the Pauli basis. The rest of the $n-m$ qubits are left as free qubits, represented by 0's and 1's. The reduction in qubit count decreases the constraint set size for reconstructing the matrix from $2^n$ to $2^m$; when $m=1$, the size of this set is just 2. Consequently, this exponentially reduces the number of constraints for reconstructing the matrix in a divide-and-conquer fashion, thereby exponentially reducing the circuit count.
    \item[3.] \textbf{Efficient deterministic O(1) grouping:} The grouping of $2^n\times2^n$ terms in a fully dense matrix into a total of $2^n$ clusters can be achieved using a standardized O(1) process. This streamlined clustering enables the simultaneity of measurements, consequently diminishing the computational overhead.
    \item[4.] \textbf{Well-defined and efficient circuit construction:} The binary-arithmetic-based transformation identification and circuit construction strategy enhances the efficiency by eliminating the iterative process in~\citep{Kondo2022Quantum}, resulting in a complexity reduction to $O(1)$. The appendix will detail this simplification. Moreover, classical techniques can post-process the information stored in the aforementioned $2^{n-m}$ qubits, which proves more efficient than using quantum algorithms.
    \item[5.] \textbf{Efficient measurement:} The algorithm's design allows measurements at any bit, wherein one measurement covers $2^{n}$ terms, thereby extending its applicability beyond specific locations, and makes it adaptive to the connectivity constraints in NISQ quantum hardware.

\end{itemize}
As a result, our method effectively improves efficiency by addressing the two aforementioned considerations. The remainder of the paper is structured as follows. Section~\ref{sec:2_expe_esti} delineates our strategy of expectation estimation. Section~\ref{sec:3_method} describes the proposed algorithm and the corresponding implementation. Section~\ref{sec:4_complexity} presents the computational complexity of the proposed method in terms of the number of unique circuits and the time complexity.
Section~\ref{sec:5_results} offers the results and the corresponding discussion. Last but not least, Section~\ref{sec:6_conclusion} concludes our main findings, identifies limitations, and suggests avenues for future research.

\section{Estimation of $\bra{\phi} M \ket{\psi}$}\label{sec:2_expe_esti}
The objective of this work is to propose and implement a protocol to compute inner products of the form $\bra{\phi} M \ket{\psi}$ on a quantum computer. Specifically, let $\ket{\phi}, \ket{\psi} \in {\mathbb C}^{2^n}$ be two $n$-qubit quantum states, and $M \in {\mathbb C}^{2^n\times 2^n}$ be an arbitrary $n$-qubit matrix, which we write explicitly as
\begin{align}
\label{eq:exp}
    M = \begin{bmatrix}
    M_{00} & M_{01} &\cdots & M_{0,2^n-2} & M_{0,2^n-1} \\
    M_{10} & M_{11} &\cdots & M_{1,2^n-2} & M_{1,2^n-1} \\
    \vdots&\vdots&\ddots&\vdots&\vdots\\
    M_{2^n-2,0} & M_{2^n-2,1} &\cdots& M_{2^n-2,2^n-2} & M_{2^n-2,2^n-1} \\
    M_{2^n-1,0} & M_{2^n-1,1} &\cdots& M_{2^n-1,2^n-2} & M_{2^n-1,2^n-1}
\end{bmatrix}
\end{align}
where $M_{rc}$, for $r, c \in \{0,1, \dots, 2^n-1\}$, is the entry occupying the $r$-th row and $c$-th column of matrix $M$.
This inner product can be expressed as
\begin{eqnarray}
    \bra{\phi} M \ket{\psi} = \sum^{2^n-1}_{r, c=0}M_{rc}\bra{\phi}\ket{r}\bra{c}\ket{\psi}.
\end{eqnarray}
For simplicity, we consider the case where $\phi = \psi$ for the following discussion. The case of $\phi \neq \psi$ can be transformed into the present case with the same states by introducing an additional ancillary bit, as we demonstrate in \Cref{app:two_states}.

It is straightforward to estimate the expectation values corresponding to the diagonal entries of $M$ directly, namely by measuring the state $\phi$ and computing the sum
\begin{eqnarray}
    \sum^{2^n-1}_{i=0}M_{ii}\bra{\phi}\ket{i}\bra{i}\ket{\phi}
    =\sum^{2^n-1}_{i=0}M_{ii}|\bra{i}\ket{\phi}|^2,
\end{eqnarray}
where $|\bra{i}\ket{\phi}|^2$ is the probability of obtaining the bit-string $i$ when directly measuring $\ket{\phi}$ in the computational basis.

The challenge here resides in the estimation of the expectation values corresponding to off-diagonal terms:
\begin{eqnarray}
    \sum^{2^n-1}_{r=0}\sum^{2^n-1}_{c=0, r \neq c}M_{rc}\bra{\phi}\ket{r}\bra{c}\ket{\phi}.
\end{eqnarray}
As mentioned in the introduction, one established method is to express the $2^n$ off-diagonal terms in terms of the tensor product of $n$ Pauli matrices.

Specifically, for $n$-qubit system, in the naive Pauli decomposition, the expectation estimation with Pauli measurement transforms all qubits of the target state $\ket{\psi}$ into the Pauli bases, the observable of this Pauli measurement corresponding to a matrix $\mathcal P_n
= \big\{ P_1 \otimes \cdots \otimes P_n : \forall i,\ P_i \in \{I, X, Y,
Z\}\big\}$ and the Pauli matrices~\cite{Nielsen2010QCQI} are defined by
\begin{align} \label{eq:pauli_gates}
    I&=\ket{0}\!\bra{0}+\ket{1}\!\bra{1},\
    X=\ket{1}\!\bra{0}+\ket{0}\!\bra{1}, \nonumber\\
    Z&=\ket{0}\!\bra{0}-\ket{1}\!\bra{1},\ Y=iXZ.
\end{align}
The expectation $\bra{\psi} \mathcal P_n \ket{\psi}$ involves information with the bit string determined by the Pauli string.

The matrix $M$ can be decomposed into a linear combination of multiple elements of $\mathcal P_n$. The challenge of this approach lies in the fact that $\mathcal P_n$ forms an orthogonal basis for the vector space of $2^n \times 2^n$ complex matrices. The matrix $M$ is to be expressed as the linear combination of this basis, and $4^n$ terms are involved in the linear combination in the worst case. Meanwhile, determining the coefficient requires multiplying two $2^n \times 2^n$ matrix, rendering the naive Pauli decomposition inefficient.

Here we propose the strategy of partition $M$ and decomposed the partitions in the subspace with lower dimension. Specifically, when $m$ qubits, initially in state $\ket{\phi}$, undergo a transformation to the Pauli bases (where $m < n$), the measurement involves only these $m$ qubits, corresponding to the length-$m$ Pauli operators. Consequently, the remaining $n-m$ qubits are free to adopt the outcomes of 0 or 1, which corresponding to the Hermitian $\ket{0} \bra{0}$ and $\ket{1} \bra{1}$. With the length $n-m$ bits determined and the other $m$ qubits form the length-$m$ Pauli operators, a sub-space of $4^{n-m}$. And in this subspace the worst case for the linear combination is bounded by $4^{n-m}$ terms. Since the $n-m$ bits are free there are $2^{n-m}$ such sub-space and the matrix $M$ is also partitioned into $2^{n-m}$ parts. However, it is crucial to note that the information for the $n-m$ bits is derived via post-processing measured bit counts, leading to the information boost from 1 piece in the measurement of length-$n$ Pauli operator to $2^{n-m}$ pieces in the measurement of length-$m$ Pauli operator. Following this strategy, one way to enhance the Naive Pauli decomposition is by minimizing the dimension of the subspace to minimize the bound on the worst case and resulting in $m=1$, which is the subspace defined by ${I, X, Y, Z}$ the famous Pauli bases for $2 \times 2$ matrix. This approach involves separately computing the expectation $\bra{\phi}\ket{r}\bra{c}\ket{\phi}$ and addressing the coefficient $M_{rc}$ afterward, which is termed as \textit{partial Pauli decomposition} in this context.

Therefore, the task becomes finding efficient methods to estimate the values of
\begin{eqnarray}
    \bra{\phi}\ket{r}\bra{c}\ket{\phi}, r \neq c,
    \label{eq:rc}
\end{eqnarray}
with one qubit been measured. Here $(r, c)$ are the row, column index pair, and $\ket{r} \bra{c}$ is the $2^n \times 2^n$ matrix with all 0s but 1 at row $r$, column $c$.
Generally, quantum expectations $\bra{\phi}\ket{r}\bra{c}\ket{\phi}$ don't directly translate to measurable observables. However, if a suitable unitary transformation $T$ can be identified to map the pair $(r, c)$ to a specific $( \bar r, \bar c)$ corresponding to a measurable observable in a new quantum state, estimating the expectation in Eq.~\ref{eq:rc} becomes feasible. Consequently, it allows the determination of the value in Eq.~\ref{eq:exp} as
\begin{align}
    \bra{\phi} \ket{r}\bra{c} \ket{\psi} &= \bra{\phi} T^T T \ket{r}\bra{c} T^T T \ket{\psi} \nonumber \\
    &= \bra{\phi} T^T \ket{\bar r}\bra{\bar c} T \ket{\psi} =\bra{\phi'}\ket{\bar r}\bra{\bar c}\ket{\psi'},
    \label{eq:T}
\end{align}
where $T$ is the unitary transformation, $\ket{\psi'} = T \ket{\psi}$, $\ket{\phi'} = T \ket{\phi}$ are the new quantum states, $\ket{\bar r} = T \ket{r}$, $\ket{\bar c} = T \ket{c}$ are the new row and column, and $\ket{\bar r} \bra{\bar c}$ is a new matrix. Given our objective to transform a single qubit into the Pauli basis, the resulting states $\ket{\bar r}$ and $\ket{\bar c}$ should differ by only a single bit, thereby implying the single-bit differing states defined in Sec.~\ref{sec:SBDS}.

\section{Method}\label{sec:3_method}
\subsection{The single-bit differing states}\label{sec:SBDS}
To efficiently calculate the expectation in Sec.~\ref{sec:2_expe_esti}, the single-bit differing states, denoted as the states pair as $(\ket{r}, \ket{r'})$ are proposed. For $n$-qubit system, the single-bit differing states $(\ket{r}, \ket{r'})$ is defined as the states $\ket{r}$ and $\ket{r'}$, such that there are only a single differing bit,i.e. the Hamming weight $\mathrm{wt} (r \oplus r')=1 $.
\begin{eqnarray}
    \label{eq:rr}
    \ket{r} &=& \ket{b_1\ldots b_{k-1} \ 0 \ b_{k+1} \ldots b_n}, \\ \nonumber
    \ket{r'} &=& \ket{b_1\ldots b_{k-1} \ 1 \ b_{k+1} \ldots b_n},
\end{eqnarray}
with $b_i \in \mathbb Z_2,i \in \{1, 2, \ldots n-1, n\}$. $\ket{r}$ and $\ket{r'}$ only differ in the $k$-th bit. The single-bit differing states can be efficiently calculated from arbitrary $n$-qubit pair $(\ket{r}, \ket{c})$ with binary arithmetic introduced in Sec.~\ref{bina_arit}

\subsection{Binary arithmetic and matrix permutation}\label{bina_arit}
The outcome of applying the $X$ and $CNOT$ gates to an $n$-qubit state can be efficiently represented using binary arithmetic, as the effect of $X$ and $CNOT$ gates are illustrated in Fig.~\ref{fig:circ}.
When an $X$ gate is applied to a qubit $q_0$ in the state $a$, the resulting state $X(q_0)$ becomes $a \oplus 1$. Similarly, when a $CNOT$ gate is applied with $q_1$ as the control qubit at state $b$ and $q_2$ as the target qubit at state $c$, the resulting states are $b$ and $b \oplus c$ for $q_1$ and $q_2$, respectively.
Here $a, b, c \in \mathbf Z^2$, and the operator $\oplus$ is the binary addition mod 2.

When a series of $X$ and $CNOT$ gates is applied, they collectively represent a transformation $T$, which is denoted as a matrix $P$ of size $2^n \times 2^n$. Applying $P$ to the $2^n \times 2^n$ matrix $A$ through the outer product expression $A = \ketbra{r}{c}$
 where $r, c \in {1, 2, \ldots, n-1, n}$, yields the resulting matrix $\bar A = PAP^{\dagger} = P\ketbra{r}{c}P^{\dagger} = \ketbra{\bar r}{\bar c}$, where $\ket{\bar r} = P \ket{r}$ and $\ket{\bar c} = P \ket{c}$.  This permutation operation of $P$ on $A$, or specifically the effect of consecutive $X$ and $CNOT$ gates, relocates the unity entry at row $r$, column $c$ in $A$ to a new position denoted by row $\bar r$, column $\bar c$. This permutation on the bit-string of $r$ and $c$ can be expressed as the linear system in $\mathbb Z^2$ as,
\begin{eqnarray}
    \label{eq:bin_bc}
    \bar r &=& T r \oplus R, \\ \nonumber
    \bar c &=& T c \oplus R,
\end{eqnarray}
where $r, \bar r, c, \bar c, R$ are binary vectors of length $n$ representing their binary forms. $R$ is a constant, $T$ is an $n \times n$ invertible matrix since the transformation derived from $X$ and CNOT gates is unitary.

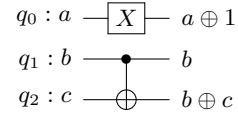
\begin{figure}
\centering
\begin{equation*}
\Qcircuit @C=1.em @R=0.4em @!R {
    \lstick{q_0: a}   & \gate{X} &  \rstick{a \oplus 1} \qw \\
    \lstick{q_1: b}   & \ctrl{1} &  \rstick{b} \qw \\
    \lstick{q_2: c}   & \targ &  \rstick{b \oplus c} \qw
}
\end{equation*}
\caption{The state before and after applying the $X$ and $CNOT$ gates on computational basis states}
\label{fig:circ}
\end{figure}

\subsection{Identify the circuits of $T$}\label{sec:circ}
The single-bit differing states introduced in Sec.~\ref{sec:SBDS} are constructed from the binary transformation $T$, which turns $(r, c)$ to $(r, r')$ based on Eq.~\ref{eq:bin_bc} as
\begin{subequations} \label{eq:TT}
\begin{eqnarray}
    r &=& T r \oplus R, \label{eq:rTr}\\
    r' &=& T c \oplus R. \label{eq:rTc}
\end{eqnarray}
\end{subequations}
It is straightforward to get the constant binary vector $R=0$ from Eq.~\ref{eq:rTr}, then Eq.~\ref{eq:TT} is simplified as
\begin{eqnarray}
    r &=& T r, \\ \nonumber
    r' &=& T c.
\end{eqnarray}
Over the ring $\mathbb Z^2$, the matrix $T$ in Eq.~\ref{eq:TT} might possess various solutions. However, there exists at least one valid $T$ that can be readily formulated. To generate matrix $T$ for given values of $r$ and $c$, one method involves initializing it as an Identity matrix. This matrix is modified by replacing column $k$ with $r \oplus c$. The column $k$ is identified as the bit index where $b_k^r \oplus b_k^c = 1$, here $b_i^j$ is the $i$-th bit of $j$. A more detailed proof is provided in \Cref{app:TG}. The transformation $T$ can encompass $2^n$ row-column pairs $(p,q)$ as long as they satisfy $p \oplus q=r
\oplus c$. This condition holds true because, given a fixed value of $r \oplus c$, there are $n$ qubits available to independently represent the values of 0 or 1 for $p$ and $q$.
There are precisely $2^n - 1$ distinct transformations $T$ in the $n$-qubit case. This count arises from the fact that the sum $r \oplus c$ represents an $n$-bit number, capable of taking any value in the range $\{1, 2, \ldots 2^n-1\}$, where 0 is excluded, due to $r=c$. Hence, there are $2^n-1$ unique transformations $T$. Each transformation acts on $2^n$ elements in $2^n \times 2^n$ matrix, which is a total number of $2^n \times(2^n-1)$, exactly the amount of off-diagonal elements. Therefore, The upper bound of distinct quantum circuits to compute the entire $2^n \times 2^n$ matrix is $2^n$, covering the diagonal terms measured directly, and $2^n - 1$ for the off-diagonal terms. The important properties of non-overlap of such transformation hold and the proofs are listed in \Cref{app:prov}.

\begin{theorem}[Non-Overlap]
Let
$i, j, k, l, p, p', q, q'\in \mathbb Z_2^n$, $p \neq p'$, $q \neq q'$,
$T_p, T_q \in \mathbb Z_2^{n \times n}$,
$T_p p = p'$,
$T_q q = q'$,
$S_p = \big \{ (i, j): T_p (i, j)^T = (i, i')^T, i \oplus i' = p' \big \}$,
$S_q = \big \{ (k, l): T_q (k, l)^T = (k, k')^T, k \oplus k' = q' \big \}$. Then, $S_p \cap S_q = \emptyset $ if $p \neq q$.
\label{th:nonoverlap}
\end{theorem}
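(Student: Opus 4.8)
The plan is to show that each set $S_p$ is nothing more than the collection of index pairs whose bitwise difference equals $p$; once this characterization is in place, non-overlap is immediate, since a single index pair cannot carry two distinct differences. Concretely, I would first unpack the compact notation $T_p(i,j)^T=(i,i')^T$ into the two scalar relations $T_p i = i$ and $T_p j = i'$, together with the side condition $i\oplus i' = p'$ built into the definition of $S_p$. The target of this first step is the implication $(i,j)\in S_p \Rightarrow i\oplus j = p$.

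The key computation exploits that $T_p$ acts $\mathbb Z_2$-linearly with vanishing affine shift $R=0$, consistent with the construction in Eq.~\ref{eq:TT}. Applying $T_p$ to $i\oplus j$ gives $T_p(i\oplus j) = T_p i \oplus T_p j = i\oplus i' = p'$, where the last equality is exactly the membership condition. Since $T_p$ is assembled from $X$ and CNOT gates, it is an invertible element of $\mathbb Z_2^{n\times n}$, so I may invert: combining $T_p(i\oplus j) = p'$ with the hypothesis $T_p p = p'$ yields $T_p(i\oplus j) = T_p p$, and injectivity forces $i\oplus j = p$. Running the identical argument with $T_q$ shows that every $(k,l)\in S_q$ satisfies $k\oplus l = q$.

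With both characterizations established, the conclusion follows by contradiction. Suppose $(i,j)\in S_p\cap S_q$; being one and the same pair in both sets, it has a single difference $i\oplus j$, which the first characterization forces to equal $p$ and the second forces to equal $q$. Hence $p = q$, contradicting the hypothesis $p\neq q$, and therefore $S_p\cap S_q = \emptyset$.

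I expect the only real subtlety—more bookkeeping than genuine obstacle—to lie in two places: correctly reading the stacked-vector notation so as to isolate the relations $T_p i = i$ and $T_p j = i'$, and explicitly invoking invertibility of $T_p$ over $\mathbb Z_2$, which is precisely what licenses the cancellation $T_p(i\oplus j)=T_p p \Rightarrow i\oplus j = p$. It is worth remarking that the fixed-point condition $T_p i = i$ is never actually used for disjointness; the weaker consequence $i\oplus j = p$ already suffices, so the argument is insensitive to that part of the hypothesis and should go through cleanly.
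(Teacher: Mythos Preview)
Your proposal is correct and follows essentially the same route as the paper's proof: unpack the stacked notation into $T_p i = i$, $T_p j = i'$, add to obtain $T_p(i\oplus j)=p'$, invoke $T_p p = p'$ and invertibility to conclude $i\oplus j = p$, then argue that a common pair would force $p=q$. Your write-up is in fact slightly more explicit than the paper's in flagging the invertibility step and in observing that the fixed-point relation $T_p i = i$ is not actually used.
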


Theorem~\ref{th:nonoverlap} ensures that for these $2^n-1$ transformations, each transfers $2^n$ unique pairs to the single-bit differing states. The $2^{n-1}(2^n-1)$ independent off-diagonal terms in $M$ can be precisely accounted for by the $2^n-1$ transformations.

A dedicated algorithm is proposed as illustrated in Algorithm~\ref{alg:T} of \Cref{app:TG} to get $T$. The gates sequence for the circuit is recovered from the $T$ with Algorithm~\ref{alg:G} of \Cref{app:TG}: .

\subsection{The Partial Pauli Measurement (PPM) }\label{sec:PPM}
To maximize quantum circuit efficiency for expectation estimation, we extract a minimal pair of information by setting $m=1$ in the strategy introduced in ~\ref{sec:2_expe_esti}. This involves transforming a single qubit (the $k$-th qubit) into Pauli bases, denoted as partial Pauli measurement (PPM) in this work. There are $2^{n-1}$ different observables as the tensor product of
\begin{eqnarray}
    \mathcal M_{r,r'} = C_1^j \otimes \ldots C_{k-1}^j\otimes X \otimes C_{k+1}^j \ldots \otimes C_n^j,
\end{eqnarray}
where $C_i^j= I_j$, with $i \in \{1, 2, \ldots n-1, n\}$, $j \in \{0, 1\}$, $I_0 = \ketbra{0}{0}, I_1=\ketbra{1}{1}$. Each observable $\mathcal M_{r,r'}$ is a $2^n \times 2^n$ matrix with two non-zero off-diagonal components located at the symmetrical locations $(r,r')$ and $(r',r)$ where $r, r'$ are the single-bit differing states defined in Sec.~\ref{sec:SBDS}. Hence, the single-bit differing states directly enable the Partial Pauli Measurement by transforming the $k$-th bit into the Pauli bases and performing a measurement.

Generally, when it comes to arbitrary off-diagonal entries with row $r$ and column $c$, any qubit can be selected to transform to Pauli bases. However, given that $r \neq c$, at least one qubit indexed as $k$ will have different states, i.e. $b_k^r \oplus b_k^c = 1$.

This $k$-th bit is chosen for the maximum efficiency in the current work. If the matrix $M$ is symmetric, PPM transforms the $k$-th qubit to Pauli $X$ while if the matrix is anti-symmetric, PPM transforms the $k$-th qubit to Pauli $Y$. For the more general case, $M$ can be decomposed into symmetric $M_S$ and anti-symmetric $M_{AS}$ parts.

\subsection{Direct measurement with Hadamard Gate}\label{sec:DM}
For the states $\ket{r}$ and $\ket{r'}$ in Eq.~\ref{eq:rr}, applying the Hadamard gate $H$ on the $k$-th qubit, denoted as $H_k$, results in the superposition of the state $\ket{r}$ and the state $\ket{r'}$ as
\begin{eqnarray}
\label{eq:Hrr}
    H_k \ket{r} = \frac{1}{\sqrt{2}}(\ket{r} + \ket{r'}),\\ \nonumber
    H_k \ket{r'} = \frac{1}{\sqrt{2}}(\ket{r} - \ket{r'}).
\end{eqnarray}
As Eq.~\ref{eq:Hrr} generates both the state $\ket{r}$ and $\ket{r'}$ with opposite signs for $\ket{r'}$. These two identities serve as an effective method to transform off-diagonal terms into diagonal ones for direct measurement.
If the Hadamard gate $H_k$ is applied on the observable $\ket{r}\bra{r'} + \ket{r'}\bra{r}$, we obtain
\begin{align}
    \label{eq:H}
    & H_k(\ket{r}\bra{r'} + \ket{r'}\bra{r})H_k^{\dagger}
    \nonumber\\
    &= H_k\ket{r}\bra{r'}H_k^{\dagger} + H_k\ket{r'}\bra{r})H_k^{\dagger} \\ \nonumber
    &= \frac{1}{\sqrt{2}} (\ket{r} + \ket{r'}) \frac{1}{\sqrt{2}} (\bra{r} - \bra{r'}) \\ \nonumber
    &\qquad+ \frac{1}{\sqrt{2}} (\ket{r} - \ket{r'}) \frac{1}{\sqrt{2}} (\bra{r} + \bra{r'}) \\ \nonumber
    &= \frac{1}{2} (\ket{r}\bra{r} - \ket{r}\bra{r'} + \ket{r'}\bra{r} - \ket{r'}\bra{r'}) \\ \nonumber
    &\qquad+ \frac{1}{2} (\ket{r}\bra{r} + \ket{r}\bra{r'} - \ket{r'}\bra{r} - \ket{r'}\bra{r'})\\ \nonumber
    &= \ket{r}\bra{r} - \ket{r'}\bra{r'}.
\end{align}
This process demonstrates the transformation of off-diagonal terms to the diagonal ones. Applying the $S$ gate on the $k$-th qubit and utilizing the identities
\begin{align}
    \label{eq:S}
    S_k\ket{r} &= \ket{r}\\ \nonumber
    S_k\ket{r'} &= i\ket{r'}\\ \nonumber
    \bra{r}S_k^{\dagger} &= \bra{r}\\ \nonumber
    \bra{r'}S_k^{\dagger} &= -i\bra{r'}. \nonumber,
\end{align}
the observable expressed as $\ket{r}\bra{r'} - \ket{r'}\bra{r}$ can be transformed to
\begin{eqnarray}
   S_k(\ket{r} \bra{r'} -\ket{r'} \bra{r})S_k^{\dagger} = -i(\ket{r} \bra{r'} +\ket{r'} \bra{r}),
\end{eqnarray}
which resembles Eq.~\ref{eq:H} except for an additional complex coefficient $i$.

As the single-bit differing states $(\ket{r}, \ket{r'})$ are constructed, the expectation of $\bra{\phi}\ket{r}\bra{c}\ket{\phi}$ can be calculated.

Since $\bra{\phi}\ket{r}\bra{c}\ket{\phi}$ is in general a complex number, it is of the form
\begin{eqnarray}
    \bra{\phi} \ket{r} \bra{c} \ket{\phi} = \Re(\bra{\phi} \ket{r} \bra{c} \ket{\phi}) + i \Im(\bra{\phi} \ket{r} \bra{c} \ket{\phi}).
\end{eqnarray}
Here $\Re(\bra{\phi} \ket{r} \bra{c} \ket{\phi})$ and $\Im(\bra{\phi} \ket{r} \bra{c} \ket{\phi})$ are both real numbers representing the real and imaginary parts of the complex number $\bra{\phi}\ket{r}\bra{c}\ket{\phi}$. More importantly, the real and imaginary parts of $a^*b, a, b \in \mathbb C$ can expressed as
\begin{eqnarray}
   \Re(a^*b) &=& \frac{1}{2} (a^*b + b^*a), \\ \nonumber
   \Im(a^*b) &=& -\frac{i}{2} (a^*b - b^*a).
\end{eqnarray}
Hence, after constructing single-bit differing states by $T$, $\Re(\bra{\phi} \ket{r} \bra{c} \ket{\phi})$ and $\Im(\bra{\phi} \ket{r} \bra{c} \ket{\phi})$ can be calculated via the relation introduced in Eqs.~\ref{eq:H}--\ref{eq:S} as,
\begin{align}
\label{eq:ReIm}
    &\Re(\bra{\phi}\ket{r} \bra{c} \ket{\phi}) \nonumber\\
    &= \frac{1}{2} (\bra{\phi} T^{\dagger} H_k^{\dagger} H_k T\ket{r} \bra{c} T^{\dagger} H_k^{\dagger} H_k T \ket{\phi}  \nonumber\\
    &\quad+ \bra{\phi} T^{\dagger} H_k^{\dagger} H_k T \ket{c} \bra{r} T^{\dagger} H_k^{\dagger} H_k T \ket{\phi})   \nonumber\\
    &= \frac{1}{2} (\bra{\phi'}\ket{r} \bra{r'} \ket{\phi'} + \bra{\phi'} \ket{r'} \bra{r}  \ket{\phi'}), \\  \nonumber
    & \Im(\bra{\phi}\ket{r} \bra{c} \ket{\phi}) \nonumber\\ &= -\frac{i}{2} (\bra{\phi} T^{\dagger} S_k^{\dagger} H_k^{\dagger} H_k S_k T\ket{r} \bra{c} T^{\dagger} S_k^{\dagger} H_k^{\dagger} H_k S_k T \ket{\phi}   \nonumber\\
    &\quad- \bra{\phi}T^{\dagger} S_k^{\dagger} H_k^{\dagger} H_k S_k T\ket{c} \bra{r} T^{\dagger} S_k^{\dagger} H_k^{\dagger} H_k S_k T\ket{\phi})  \nonumber\\
    &= \frac{1}{2} (\bra{\phi''}\ket{r} \bra{r'} \ket{\phi''} + \bra{\phi''} \ket{r'} \bra{r}  \ket{\phi''}), \\ \nonumber
\end{align}
with $\ket{\phi'} = H_kT\ket{\phi}$, and $\ket{\phi''} = H_kS_kT\ket{\phi}$.
Because applying a phase gate $S_k$ doesn't alter the effect of transformation $T$, the extended Bell measurement (XBM) method presented by Kondo \emph{et al.}~\cite{Kondo2022Quantum} can be regarded as analogous to the direct measurement approach used in the current work. In our universal strategy, the gates linked to transformation $T$ are applied initially, potentially followed by the $S$ gate if needed, while in the XBM approach, the $S$ gate is applied first.

\section{Complexity}\label{sec:4_complexity}
The proposed method is implemented in Algorithm~\ref{alg:T} and~\ref{alg:G}. The current work dramatically reduces the number of unique circuits required to estimate expectations defined in Sec.~\ref{sec:2_expe_esti}. In this section, the number of unique circuits and the computational cost of the proposed method are discussed in detail.

\subsection{Circuit Counts}
As previously mentioned in Sec~\ref{sec:circ}, evaluating the fully dense $2^n \times 2^n$ matrix requires $2^n-1$  circuits for computing off-diagonal elements and an additional circuit without gates for direct measurement to assess diagonal elements. This totals $2^n$ unique circuits needed for the computation.

For most engineering problems, linear systems commonly feature matrices with numerous non-zero elements. The performance of current work on such cases is of more practical importance. The following discussion will firstly introduce the concept of band-width, and then discuss circuit counts, gate counts and algorithm complexity.
To facilitate the discussion, the band-width $w$ of the matrix $M$ are defined as
\begin{definition}
    A matrix $M \in{\mathbb C}^{2^n \times 2^n}$ has band-width $w$ if for all $i, j$ satisfying $|i-j|>w$, we have $M_{ij}=0$.
\end{definition}

Based on the definition of  band-width, the theorem (whose proof is in~\cite{Kondo2022Quantum}) holds.

\begin{theorem}[Worst case for bandwidth $w$ matrix]
    \label{theo:2}
    Let $M\in{\mathbb C}^{2^n\times 2^n}$ and $\ket{\phi}, \ket{\psi}\in{\mathbb C}^{2^n}$. When M has the bandwidth $w>0$, the worst case to evaluate $\bra{\phi}|M\ket{\psi}$ needs $2((n-r)k+2^r)$ unique circuits, where $r=\lceil \log_2k \rceil$.
\end{theorem}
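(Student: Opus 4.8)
The plan is to reduce the circuit count to a purely combinatorial count of distinct XOR values, and then to count these via the arithmetic-difference constraint imposed by the bandwidth. Recall from Section~\ref{sec:circ} that the circuits needed for the off-diagonal part are in one-to-one correspondence with the distinct transformations $T$, and that each such $T$ is determined solely by the XOR value $d = p \oplus q$ of the row/column pair $(p,q)$ on which it acts. By Eq.~\ref{eq:ReIm}, each transformation gives rise to at most two circuits---one ($H_kT$) for the real part and one ($H_kS_kT$) for the imaginary part---while the diagonal is handled by the single identity transformation of Section~\ref{sec:DM}. Writing $k$ for the bandwidth $w$, the worst case is the fully populated band, in which every pair with $1 \le |p-q| \le k$ carries a nonzero entry. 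Hence it suffices to show that the number of distinct values $d = p \oplus q$ realizable by such within-band pairs, augmented by one for the diagonal, equals $(n-r)k + 2^r$; the prefactor $2$ then follows from the real/imaginary split.

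The first technical step is a lemma identifying, for each nonzero pattern $d$, the smallest arithmetic distance at which it can occur. Concretely, I would prove that if $d$ has highest set bit at position $h$ (so $2^h \le d \le 2^{h+1}-1$), then $\min\{\,|p-q| : p \oplus q = d\,\} = 2^{h+1} - d$. The argument is elementary binary arithmetic: assuming $p > q$, the top differing bit forces a $1$ in $p$ and a $0$ in $q$ at position $h$, contributing $2^h$ to $p-q$; the difference is then minimized by letting $q$ exceed $p$ on every lower differing bit, i.e.\ by placing all remaining set bits of $d$ in $q$, which subtracts $d - 2^h$ and yields $2^h - (d - 2^h) = 2^{h+1} - d$. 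I expect this borrow-propagation bookkeeping to be the main obstacle, since one must verify that this configuration is both feasible and optimal.

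Granting the lemma, a pattern $d$ is realizable within bandwidth $k$ precisely when $2^{h+1} - d \le k$, i.e.\ $d \ge 2^{h+1} - k$. For a fixed top-bit position $h$ this leaves the admissible patterns $\max(2^h, 2^{h+1}-k) \le d \le 2^{h+1}-1$, giving $2^h$ patterns when $2^h \le k$ (the whole block is admissible) and exactly $k$ patterns when $2^h > k$. Summing over $h = 0, \ldots, n-1$ then produces a geometric part $\sum_{2^h \le k} 2^h$ and a linear part $k$ times the number of $h$ with $2^h > k$. The final step is to evaluate this sum and check that it collapses to $(n-r)k + 2^r - 1$ with $r = \lceil \log_2 k \rceil$; here I would treat the power-of-two and non-power-of-two cases separately and observe that the floor arising naturally in the counting and the ceiling $\lceil \log_2 k \rceil$ in the statement coincide in both, so that a single formula holds. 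Adding $1$ for the identity transformation on the diagonal gives $(n-r)k + 2^r$ distinct transformations, and allotting two circuits each for the real and imaginary parts yields the claimed $2((n-r)k + 2^r)$, completing the worst-case bound.
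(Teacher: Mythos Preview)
The paper does not give its own proof of this theorem; immediately after stating it, the authors write that ``the theorem (whose proof is in~\cite{Kondo2022Quantum}) holds'' and merely restate the bound in Eq.~\eqref{eq:upper-m}. So there is nothing in the present paper to compare your argument against. That said, your plan is sound and is essentially the argument one finds in the cited XBM reference: identify circuits with XOR classes $d=p\oplus q$, characterise which $d$ can occur within the band via the minimum-distance lemma $\min\{|p-q|:p\oplus q=d\}=2^{h+1}-d$, and sum over the top-bit position $h$. Your case split at $2^h\le k$ versus $2^h>k$ and the verification that the resulting count collapses to $(n-r)k+2^r-1$ with $r=\lceil\log_2 k\rceil$ in both the power-of-two and non-power-of-two cases are correct.

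One small point of bookkeeping: you include the identity transformation (diagonal) among the $(n-r)k+2^r$ transformations and then multiply the whole thing by~$2$ for the real/imaginary split. In the same-state case $\ket{\phi}=\ket{\psi}$ the diagonal requires only a single computational-basis measurement, so strictly speaking you overcount by one circuit there. This is harmless for the ``worst case'' upper bound as stated, and in the two-state case the ancilla construction of Appendix~\ref{app:two_states} pushes everything into an off-diagonal block anyway, but it is worth flagging that the factor of~$2$ on the diagonal term is a convenience rather than a necessity.
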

The upper bound of the unique circuit counts for an $n$-qubit system with bandwidth $k$ ~\cite{Kondo2022Quantum} is

\begin{align}
\label{eq:upper-m}
\overline{m}(n,k)
&:=\left\{
        \begin{array}{ll}
            2((n-\log_2k)k+2^{\log_2k})  & k>0, \\[5mm]
            1 & k=0.
        \end{array}
    \right.
\end{align}
Readers can refer to the Appendix in~\cite{Kondo2022Quantum} for more details.

\subsection{Gate Counts}
As described in Algorithm~\ref{alg:G}, $CNOT$ gates are applied only when there is an off-diagonal element of value `1' within the $i$-th row of $T$, denoted as $T[i,j]$ where $i \neq j$. Consequently, the maximum number of $CNOT$ gates is limited to $n-1$. Additionally, no more than 2 gates are required for PPM or Direct measurement. Thus, the circuit's gate count is $O(n)$.

\subsection{Time Complexity of grouping for $T$}
Determining the $(r,c)$  pairs for a specific transformation in $M$ involves a classical pre-processing step. This step primarily revolves around binary addition, i.e. $AND$, on $n$ bits, with a computational cost of $O(n)$. If matrix $M$ as $\{M_{rc}\ne0\}$ contains $p$ non-zero elements, the time complexity for grouping terms is $O(np)$. Constructing each measurement circuit requires a time complexity of $O(n)$. It involves a maximum of $n-1$ $CNOT$ gates and, at most, an additional three gates which can be $X$ , $H$ or $S$. More importantly, each circuit gets constructed only once for each unprocessed value of $r + c$. If there are $d$ distinct $r + c$ value, the overall time complexity for building the circuits becomes $O(nd)$.

\section{Results and Discussion}
\label{sec:5_results}
\begin{figure*}
    \centering
    \begin{subfigure}[t]{0.33\textwidth}
        \centering
        \includegraphics[width=0.9\textwidth]{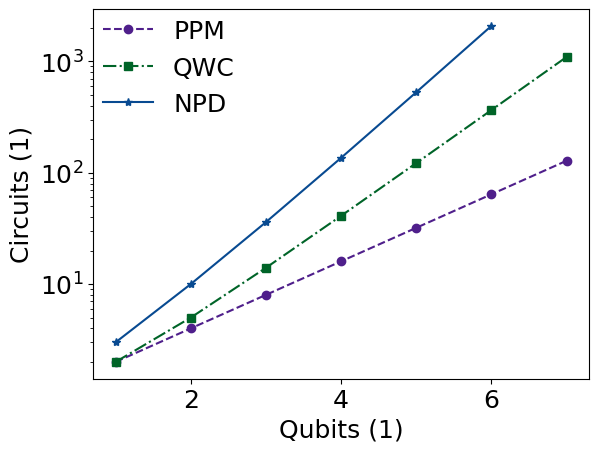}
        \caption{}
    \end{subfigure}%
    \begin{subfigure}[t]{0.33\textwidth}
        \centering
        \includegraphics[width=0.9\textwidth]{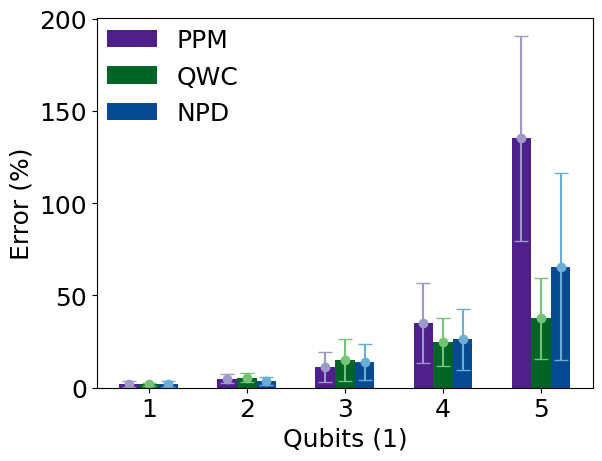}
        \caption{}
    \end{subfigure}%
    \begin{subfigure}[t]{0.33\textwidth}
        \centering
        \includegraphics[width=0.9\textwidth]{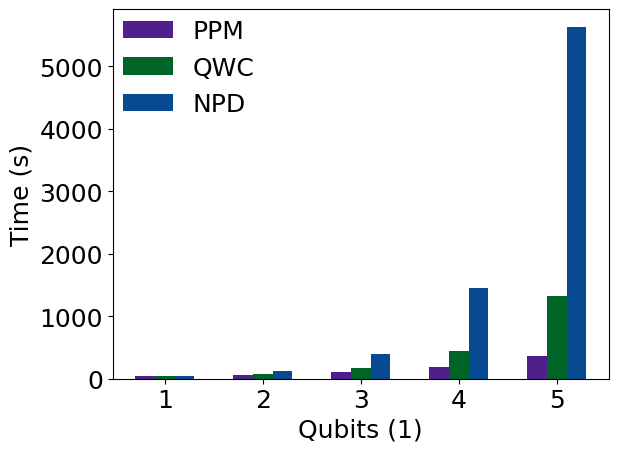}
        \caption{}
    \end{subfigure}%

    \begin{subfigure}[t]{0.33\textwidth}
        \centering
        \includegraphics[width=0.9\textwidth]{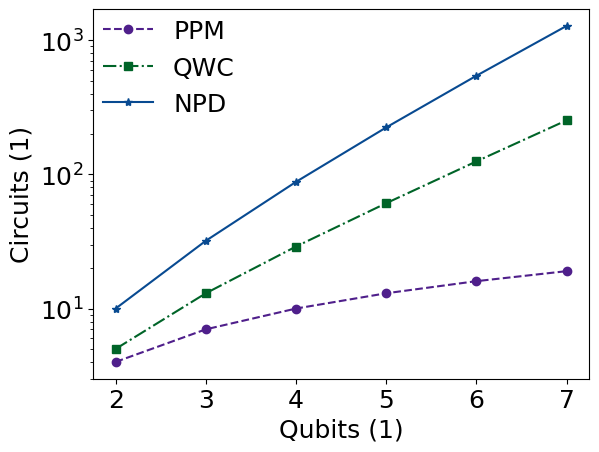}
        \caption{}
    \end{subfigure}%
    \begin{subfigure}[t]{0.33\textwidth}
        \centering
        \includegraphics[width=0.9\textwidth]{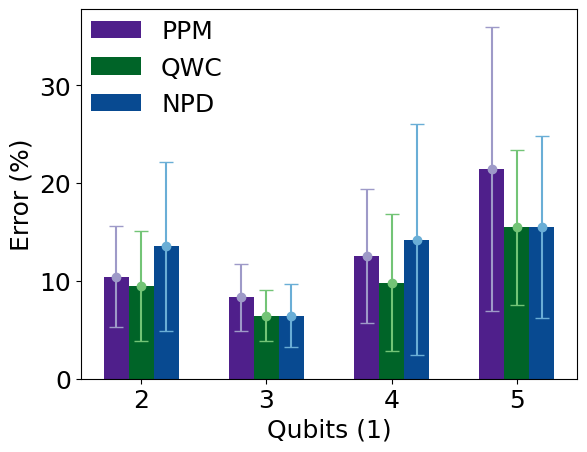}
        \caption{}
    \end{subfigure}%
    \begin{subfigure}[t]{0.33\textwidth}
        \centering
        \includegraphics[width=0.9\textwidth]{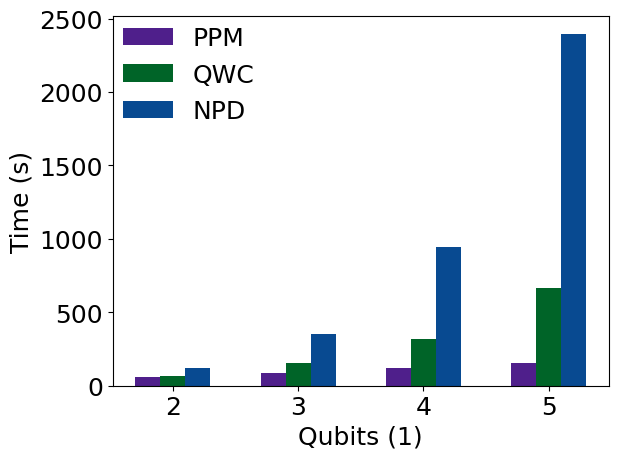}
        \caption{}
    \end{subfigure}%

    \begin{subfigure}[t]{0.33\textwidth}
        \centering
        \includegraphics[width=0.9\textwidth]{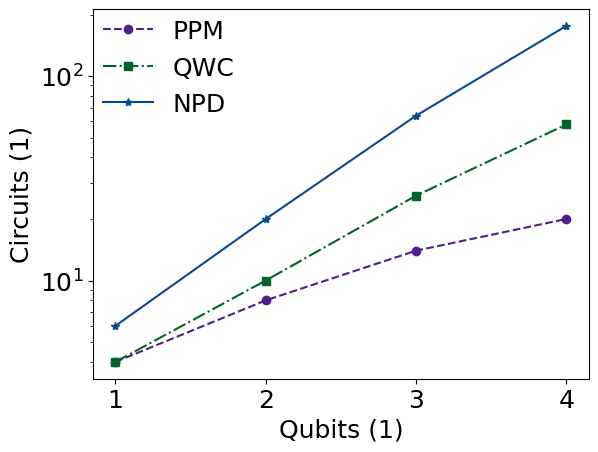}
        \caption{}
    \end{subfigure}%
    \begin{subfigure}[t]{0.33\textwidth}
        \centering
        \includegraphics[width=0.9\textwidth]{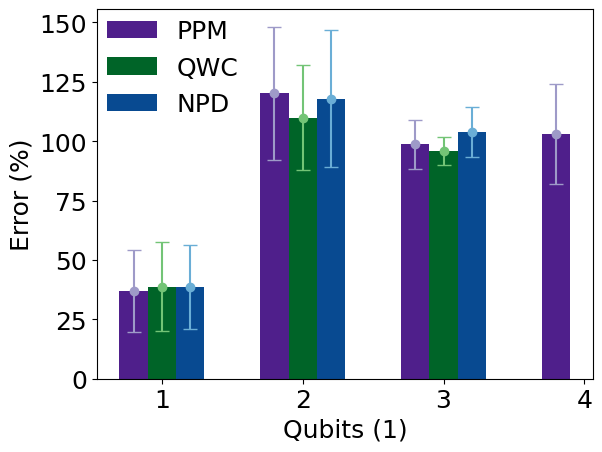}
        \caption{}
    \end{subfigure}%
    \begin{subfigure}[t]{0.33\textwidth}
        \centering
        \includegraphics[width=0.9\textwidth]{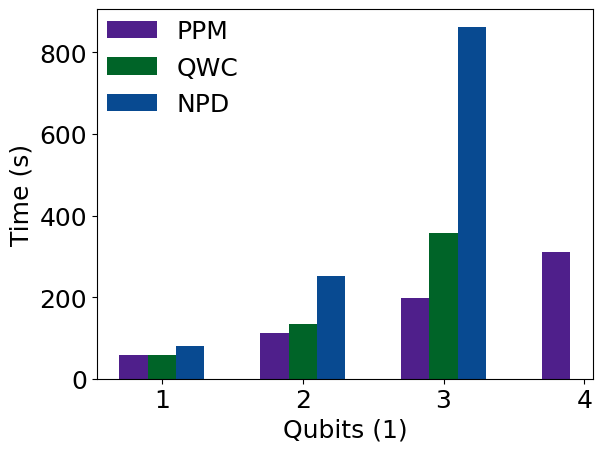}
        \caption{}
    \end{subfigure}%
    \caption{
    (a, b, c), (d, e, f), (g, h, i) the number of distinct circuits, the error of real hardware results compared with theoretical value, the execution time on the real hardware to estimate $\bra{\phi} M \ket{\psi}$ with 1) $\bra{\phi} M \ket{\phi}$ with $M$ a symmetric full matrix, 2) $\bra{\phi} M \ket{\phi}$ with $M$ a symmetric banded matrix (bandwidth 3), and 3) $\bra{\phi} M \ket{\psi}$ with $M$ a symmetric banded matrix (bandwidth 3).
PPM:the proposed method; NPD: Na\"ive Pauli Decomposition measurements; QWC: qubit-wise commuting Pauli measurements~\citep{Verteletskyi2020TJCP, McClean2016NJP}.}
\label{fig:results}
\end{figure*}

We implemented the proposed partial Pauli Decomposition method (PPM) and conducted tests using Qiskit version 1.0.2~\cite{qiskit} on the IBM-Q \texttt{ibm\_hanoi} quantum hardware. To benchmark the performance, we compared PPM against two existing methods: the Na\"ive Pauli Decomposition (NPD) method and the qubit-wise commuting (QWC) method, implemented as \texttt{SparsePauliOp} and \texttt{SparsePauliOp.group\_commuting(qubit-wise=True)}, respectively, in Qiskit.

For simplicity, we focused our tests on symmetric matrices $M$, as the anti-symmetric case only adds a single $S$ gate with negligible impact. We evaluated three test cases: 1) $\bra{\phi} M \ket{\phi}$ with $M$ a symmetric full matrix, 2) $\bra{\phi} M \ket{\phi}$ with $M$ a symmetric banded matrix (bandwidth 3), and 3) $\bra{\phi} M \ket{\psi}$ with $M$ a symmetric banded matrix (bandwidth 3). The matrix elements were randomly generated, and the error was calculated by comparing the hardware results to those from the statevector simulator. We repeated each test case 10 times and report the mean and standard deviation.

Figure~\ref{fig:results} shows the number of unique circuits, the hardware error, and the execution time for the three test cases. For the full matrix case (a-c), PPM requires exponentially fewer circuits than NPD and QWC, scaling as $2^n$ compared to $2^{1.52n}$ and $2^{1.89n}$, respectively. While PPM's error is comparable to the other methods for $n<4$, it worsens for larger $n$ due to the additional CNOT gates introduced to reduce the circuit count. Nevertheless, PPM's execution time is significantly shorter.

For the banded matrix case (d-f), PPM's circuit count advantage over NPD and QWC becomes more pronounced as $n$ increases, requiring orders of magnitude fewer circuits for large $n$. Remarkably, for $n=7$, PPM used only 19 circuits compared to 253 for QWC and 1280 for NPD. The hardware errors are comparable across methods at around 15\%, with a minor degradation for PPM due to the CNOT gates. Despite this, PPM maintains a substantial execution time advantage.

In the $\bra{\phi} M \ket{\psi}$ case (g-i), while PPM retains its circuit count efficiency, the errors for all methods escalate due to the large number of two-qubit gates required to construct $\ket{\varphi} = (\ket{0}\ket{\phi} + \ket{1}\ket{\psi})/\sqrt{2}$. Mitigating this error would require reducing the two-qubit gate error rates.

In summary, the PPM method enables significantly faster evaluation of $\bra{\phi} M \ket{\phi}$ for both full and banded matrices, with a minor trade-off in accuracy due to the additional CNOT gates introduced. Further improvements in two-qubit gate fidelities would enhance PPM's performance for more general cases like $\bra{\phi} M \ket{\psi}$. Overall, PPM presents a promising approach for efficient quantum computations on near-term devices.

\section{conclusions}\label{sec:6_conclusion}
In conclusion, the method we introduce is capable of handling any $2^n \times 2^n$ matrix with at most $2^n$ distinct quantum circuits. It does so by leveraging the binary arithmetic properties of X and CNOT gates to achieve computational efficiency, especially for small band-width matrices.
This approach not only simplifies the process of measuring matrix elements but also enhances the feasibility of executing complex quantum algorithms in the NISQ era.
Our findings are positioned to serve as a cornerstone for future advancements, especially in algorithms reliant on expectation estimation like VQE and QAOA.

\begin{acknowledgments}
This research is supported by the National Research Foundation, Singapore and A*STAR under its Quantum Engineering Programme (NRF2021-QEP2-02-P04).
\end{acknowledgments}

\appendix\label{sec:7_app}
\setcounter{figure}{0}
\renewcommand{\thefigure}{A\arabic{figure}}

\section{Proofs of Theorems}\label{app:prov}
\setcounter{theorem}{0}
\begin{theorem}[\textbf{Non-Overlap}]
Let
$i, j, k, l, p, p', q, q'\in \mathbb Z_2^n$, $p \neq p'$, $q \neq q'$,
$T_p, T_q \in \mathbb Z_2^{n \times n}$,
$T_p p = p'$,
$T_q q = q'$,
$S_p = \big \{ (i, j):  T_p (i, j)^T = (i, i')^T, i \oplus i' = p' \big \}$,
$S_q = \big \{ (k, l):  T_q (k, l)^T = (k, k')^T, k \oplus k' = q' \big \}$.
Then, $S_p \cap S_q = \emptyset $
if $p \neq q$.
\end{theorem}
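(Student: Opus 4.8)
The plan is to show that membership of a pair in $S_p$ forces the bitwise XOR of its two components to equal $p$; once this characterization is in hand, disjointness for $p \neq q$ is immediate, since a single pair cannot simultaneously have two distinct XOR differences.

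First I would unpack the definition of $S_p$. A pair $(i,j)$ lies in $S_p$ precisely when applying $T_p$ componentwise yields a pair whose entries are $i$ and some $i'$ with $i \oplus i' = p'$; that is, $T_p i = i$, $T_p j = i'$, and $i \oplus i' = p'$. The only structural facts I need about $T_p$ are that it is $\mathbb Z_2$-linear on bit strings (being the matrix induced by a sequence of $X$ and CNOT gates, per Sec.~\ref{bina_arit}) and that it is invertible (Sec.~\ref{sec:circ}), together with the defining relation $T_p p = p'$.

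The key step is then a one-line computation over $\mathbb Z_2$. For $(i,j) \in S_p$, linearity gives
\begin{equation}
T_p(i \oplus j) = T_p i \oplus T_p j = i \oplus i' = p' = T_p p,
\end{equation}
and applying $T_p^{-1}$ to both ends yields $i \oplus j = p$. By the identical argument applied to $T_q$, any $(k,l) \in S_q$ satisfies $k \oplus l = q$. Hence, were some pair to belong to $S_p \cap S_q$, its XOR difference would equal both $p$ and $q$, forcing $p = q$ and contradicting the hypothesis $p \neq q$; therefore $S_p \cap S_q = \emptyset$.

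The main obstacle is not any hard estimate but rather pinning down the two properties of $T_p$ that make the cancellation valid: that $T_p$ acts $\mathbb Z_2$-linearly on bit strings, so that $T_p(i\oplus j) = T_p i \oplus T_p j$, and that it is invertible, so that $T_p(i\oplus j) = T_p p$ implies $i \oplus j = p$. Both are already supplied by the earlier sections—linearity from the binary-arithmetic description of $X$ and CNOT gates, and invertibility from $T$ being the matrix of a unitary—so the remaining work is purely bookkeeping. I would close by recording the corollary used in the text: since the admissible difference values $p \in \{1,\dots,2^n-1\}$ partition the off-diagonal index pairs according to their XOR difference, the corresponding transformations together cover all $2^{n-1}(2^n-1)$ independent off-diagonal terms without overlap.
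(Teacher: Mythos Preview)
Your proposal is correct and follows essentially the same argument as the paper: from $T_p i = i$ and $T_p j = i'$ one adds over $\mathbb Z_2$ to obtain $T_p(i\oplus j) = i \oplus i' = p' = T_p p$, then invokes invertibility of $T_p$ to conclude $i\oplus j = p$, and disjointness for $p \neq q$ follows. The only difference is cosmetic---you make the appeals to linearity and invertibility explicit, whereas the paper leaves them implicit.
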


\begin{proof}
\begin{subequations}
\begin{eqnarray}
    T_p i  = i, \label{eq:trans_rc1} \\
    T_p j  = i'. \label{eq:trans_rc2}
\end{eqnarray}
\end{subequations}
Adding Eq.~\ref{eq:trans_rc1} and Eq.~\ref{eq:trans_rc2} yields
\begin{eqnarray}
    T_p(i \oplus j) = i \oplus i' = p'. \label{eq:hp_solu}
\end{eqnarray}

Recall $T_p$ is defined by $p$ as $T_p p = p'$. We have
\begin{eqnarray}
    T_p(i \oplus j) = T_p p,
 \end{eqnarray}
\begin{eqnarray}
    p = i \oplus j,
\end{eqnarray}

Similarly,
\begin{eqnarray}
    q = k \oplus l.
\end{eqnarray}

Since $p \neq q$, the pairs of $(i, j)$ and $(k, l)$ do not have common elements, $S_p \cap S_q = \emptyset$.
\end{proof}

\section{$\ket{\phi} \neq \ket{\psi}$}
\label{app:two_states}
For the case $\ket{\phi} \neq \ket{\psi}$, the expectation $\bra{\phi} M \ket{\psi}$ can be evaluated by introducing an ancilla qubit.
Let $M \in {\mathbb C}^{2^{n} \times 2^{n}}$ be a complex matrix and $\ket{\phi}, \ket{\psi}\in{\mathbb{C}^{2^{n}}}$ be two $n$-qubit quantum states.
The expectation $\bra{\phi} M \ket{\psi}$ can be expressed as
\begin{eqnarray}
\bra{\phi} M \ket{\psi} = \bra{\varphi} M' \ket{\varphi},
\end{eqnarray}
where
\begin{eqnarray}
M' = \begin{cases}
        M & {\rm if}\ \ \ \ket{\phi} = \ket{\psi}, \\ \\
        \begin{bmatrix} {\bf 0} & 2M \\ {\bf 0} & {\bf 0} \end{bmatrix} & {\rm if}\ \ \ \ket{\phi} \neq \ket{\psi},
    \end{cases}
\end{eqnarray}
and
\begin{eqnarray}
\ket{\varphi} = \begin{cases}
        \ket{\phi} & {\rm if}\ \ \ \ket{\phi} = \ket{\psi}, \\ \\
        \dfrac{\ket{0}\ket{\phi} + \ket{1}\ket{\psi}}{\sqrt{2}} & {\rm if}\ \ \ \ket{\phi} \neq \ket{\psi}.
    \end{cases}
\end{eqnarray}
Note that the state $\ket{\varphi} = \dfrac{\ket{0}\ket{\phi} + \ket{1}\ket{\psi}}{\sqrt{2}}$ can be prepared using the circuit described in Fig.~\ref{fig:221}(c)~\cite{Liu2021PRA}.

\begin{figure}[hbt!]
\centering
\begin{equation*}
\Qcircuit @C=1.em @R=0.4em @!R {
    \lstick{0^n}   & \qw      &  \gate{M_{\ket{\phi}}} & \qw      &  \gate{M_{\ket{\psi}}}  & \qw \\
    \lstick{0}     & \gate{H} & \ctrl{-1}              & \gate{X} & \ctrl{-1}               & \qw \\
 }
\end{equation*}
\caption{The circuit transforms the estimation of expectations from two different states onto a common state by utilizing an ancilla qubit.}
\label{fig:221}
\end{figure}
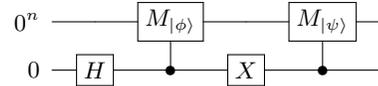

\section{Procedure for obtaining the binary transformation matrix $T$}\label{app:TG}
As stated in Sec.~\ref{bina_arit}, the transformation $T$ can convert the pairs $(r, c)$ to $(r, r')$.
The matrix $T$ is determined from
\begin{subequations}
\begin{eqnarray}
    r &=& T r \oplus R, \label{eq:trans1}\\ 
    c &=& T r' \oplus R,
    \label{eq:trans2}
\end{eqnarray}
\end{subequations}
where $T$ must be unitary since $T$ is the combined effect of $X$ and CNOT gates. From Eq.~\ref{eq:trans1}, it's evident that the constant binary vector $R=0$. By adding Eq.~\ref{eq:trans1} and Eq.~\ref{eq:trans2} together,
\begin{eqnarray}
    r \oplus c &=& T (r \oplus r').
\end{eqnarray}
Here $r \oplus r'$ is a special vector with zeros in all elements except for the $k$-th element as 1. It reflects the $k$-th column of $T_k = [q_{1k}, q_{2k}, \ldots q_{nk}]^T$ equals to $r \oplus c$.  Hence the matrix $T$ can be easily constructed by replacing the $k$-th column of an $n\times n$ Identity matrix with $r \oplus c$. It is noted that the $k$-th bits of $r$, $c$ and $r \oplus c$ is 0, 1, 1 such that

\begin{eqnarray}
    T = I \oplus A \oplus B,
\end{eqnarray}
where $A$ is an $n \times n$ matrix, with all columns as zeros except for the $k$-th column as $r \oplus c$, and $B$ an $n \times n$ matrix with all zeros except for the single element at row $k$ and column $k$ being 1. They are given by
\begin{align}
    A &= [\textbf{0}, \textbf{0}, \cdots, \textbf{0}, \textbf{r $\oplus$ c} , \textbf{0}, \cdots, \textbf{0}, \textbf{0}], \nonumber\\
    B &= [\textbf{0}, \textbf{0}, \cdots, \textbf{0}, \textbf{v} , \textbf{0}, \cdots, \textbf{0}, \textbf{0}],
\end{align}
where $\textbf{0}$ is an $n \times 1$ vector of all zeros and $\textbf{v}$ is an $n \times 1$ vector of all zeros except for the $k$-th element being 1.
\begin{align}
    \textbf{0} = \begin{bmatrix}
        0\\
        0\\
        \vdots\\
        0\\
        0\\
    \end{bmatrix},
    \textbf{v} = \begin{bmatrix}
        0\\
        \vdots\\
        0\\
        1\\
        0\\
        \vdots\\
        0\\
    \end{bmatrix}.
\end{align}
Since the $k$-th bit of $r$ is 0, $A r = \textbf{0}$ and $B r = \textbf{0}$. On the other hand, since the $k$-th bit of $c$ is 1, $A c = r \oplus c$ and $B c = \textbf{v}$. Hence the transformation of $(r, c)$ to $(r, r')$ using $T$ can be verified as
\begin{align}
T r = (I \oplus A \oplus B)r = r \oplus \textbf{0} \oplus \textbf{0} = r, \\
T c = (I \oplus A \oplus B)c = c \oplus r \oplus c \oplus \textbf{v} = r \oplus \textbf{v} = r'.
\end{align}

The Algorithm~\ref{alg:T} details how to construct $T$,
\begin{algorithm*}[tb]
    \caption{Algorithm for determining bit linear transformation matrix $T$ from $i$}
    \label{alg:T}
    \SetKwInOut{Input}{input}
    \SetKwInOut{Output}{output}
    \SetKwInOut{Initialize}{initialize}
    \Input{Number of qubits $N$, integer $r, c$, with $N$ bits binary representation of $r = q_{N-1}, \ldots, q_0$,  $c = p_{N-1}, \ldots, p_0$ and $r \neq c$ }
    \Output{The $N \times N$ bit linear transformation matrix $T$}
    \Initialize{$T = I$, $I$ is $N \times N$ Identity matrix, $v = r \oplus c$, $u$ list of set bits in $r\oplus c$, $u \in \{[0, \ldots, N-1]|v_{u} =1\}$ and  $k = u[0]$}

\For{$i\leftarrow 0$ \KwTo $N-1$}{
      $T[k,i]=v[i]$
      }
\end{algorithm*}

As discussed in Sec.~\ref{bina_arit}, it is straightforward to retrieve the sequence of CNOT gates from the transformation matrix $T$. Specifically,the first step is to check each row of $T$ denoted as $T[i, :]$. It aims to find whether there is another off-diagonal element with value 1 in the $j$-th column, i.e., $T[i, j] = 1$ where $i \neq j$. If there exists such a column $j$, a CNOT gate will be introduced into the gate list where $j$ works as control and $i$ as target. The Algorithm~\ref{alg:G} is proposed to achieve this objective.
\begin{algorithm*}[tb]
    \caption{Algorithm for Determining the circuit from the bit linear transformation matrix $T$}
    \label{alg:G}
    \SetKwInOut{Input}{input}
    \SetKwInOut{Output}{output}
    \SetKwInOut{Initialize}{initialize}
    \Input{The $N \times N$ bit linear transformation matrix $T$}
    \Output{The list of X and CNOT gates $G$ for the measurement.}
    \Initialize{$G\leftarrow\texttt{None}$}
\For{$i\leftarrow 0$ \KwTo $N-1$}{
    $k$, column indices where $T[i,k] = 1, k \neq i$\\
    \uIf{$k$}{
        $G \leftarrow G + CNOT(k, i)$\\
    }
}
\end{algorithm*}

%
\end{document}